\newtheorem{propi}{Proposition}
\newtheorem{teo}{Theorem}
\newtheorem{lema}{Lemma}
\newtheorem{coro}{Corollary}
\newtheorem{defi}{Definition}
\newtheorem{exa}{Example}
\author[1,2]{Federico Fioravanti\footnote{f.fioravanti@uva.nl (corresponding author)}}
\author[4]{Iyad Rahwan}
\author[3]{Fernando Tohm\'e}
\affil[1]{Institute for Logic, Language and Information, University of Amsterdam, Amsterdam, The Netherlands}
\affil[2]{Instituto de Matem\'atica (INMABB), Departamento de Matem\'atica, Universidad Nacional del Sur (UNS)-CONICET, Bah\'ia Blanca, Argentina}
\affil[3]{Instituto de Matem\'atica (INMABB), Departamento de Econom\'ia, Universidad Nacional del Sur (UNS)-CONICET, Bah\'ia Blanca, Argentina}
\affil[4]{Center for Humans and Machines, Max Planck Institute for Human Development, Berlin, Germany}
\title{Classes of Aggregation Rules for Ethical Decision Making in Automated Systems\footnote{We are grateful to Ulle Endriss, Ariel Procaccia, Felix Brandt, and two anonymous reviewers for comments and suggestions that led to an improvement of the paper. This work began during Fioravanti's research stay at the MPI for Human Development. Thanks are due to the funding of the German Academic Exchange Service (DAAD). Fioravanti acknowledges support from the Nederlandse Organisatie voor Wetenschappelijk Onderzoek Vici grant 639.023.811.}}
\date{}
\begin{document}
\maketitle
\begin{abstract}
We study a class of {\em aggregation rules} that could be applied to ethical AI decision-making. These rules yield the decisions to be made by automated systems based on the information of profiles of preferences over possible choices. We consider two different but very intuitive notions of preferences of an alternative over another one, namely {\it pairwise majority} and {\it position} dominance. Preferences are represented by permutation processes over alternatives and aggregation rules are applied to obtain results that are socially considered to be ethically correct. In this setting, we find many aggregation rules that satisfy desirable properties for an autonomous system. We also address the problem of the stability of the aggregation process, which is important when the information is variable. These results are a contribution for an AI designer that wants to justify the decisions made by an autonomous system. \\
\textit{Keywords:} Aggregation Rules; Permutation Process; Decision Analysis; Automated Systems.
\end{abstract}

\section{Introduction}\
Many relevant scenarios arising from the use of Artificial Intelligent (AI) systems require making ethical decisions. So for instance, consider the case of an autonomous vehicle suffering a brake failure amid heavy urban traffic. Should it swerve its course and hit a wall, killing its passengers? Or stay its way and kill two pedestrians who are crossing the street? Of course, a definite answer depends on the availability of several pieces of information that are missing in this story, such as, for example, the age or occupation of both the passengers and pedestrians. But even with access to complete information about the situation, it is difficult to make a decision from an ethical point of view. This kind of ethical decision-making problem has long constituted a great challenge for Artificial Intelligence \citep{wallach2008moral}. Even if ground-truth ethical principles were available, the lack of corresponding formal specifications makes it difficult to solve these problems.\\

Some authors suggest that ``an approximation as agreed upon by society'' must be applied when ethical principles are not available \citep{dwork2012fairness}. \cite{conitzer2017moral} discuss a game theoretic and a machine learning approach to the development of a general framework, instead of appealing to the use of ad-hoc rules in each scenario. Furthermore, it would also be convenient to automatize the decision-making process, aggregating the diverse opinions held in society on ethical dilemmas \citep{greene2016embedding,conitzer2017moral}. In this sense, \cite{rossi2016moral} discuss related problems where preferences (not necessarily on moral issues) are aggregated by morally ranking all the alternatives.\\

\cite{noothigattu2018voting} propose and implement a concrete approach to ethical decision-making based on tools drawn from Computational Social Choice \citep{brandt2012computational,brandt2016handbook}. Specifically, they indicate that autonomous systems should be trained on a model of collective preferences so that when they face specific ethical dilemmas, they can efficiently make the desirable choices. The process of learning the model follows four steps (see \cite{noothigattu2018voting} for more details and discussions on every step and motivations):\\

\begin{itemize}
\item[{\it i.}] \textit{Data Collection}: human voters are asked to compare (a finite set of) pairs of alternatives,
\item[{\it ii.}] \textit{Learning}: a model of each individual preference is generated,
\item[{\it iii.}] \textit{Summarization}: a single model that approximates the preferences of all voters over all alternatives is created, and
\item[{\it iv.}] \textit{Aggregation}: a voting rule aggregates the individual preferences on a specific subset of alternatives (depending on the ethical dilemma faced) into a collective decision.
\end{itemize}

Notice that these steps involve the definition of an {\em aggregation rule}, i.e., a mathematical procedure to combine the preferences over decisions to be made over alternative decisions in ethical settings. We seek to define aggregation processes yielding solutions as close as possible to an `approximate' ground-truth ethical principle. {The ultimate goal is to provide a foundation for the automation of ethical decision-making in situations such as the failure of the braking system of an autonomous car \citep{piano2020ethical}, the automatic assignation of priorities in kidney exchange programs \citep{freedman2020adapting}, the assistance in judicial deliberation \citep{aletras2016predicting,berk2021fairness}, or more controversially, in the case of governments that use AI to improve their legitimacy \citep{starke2020artificial}.}\\


To address this general goal, we need to establish a model of preferences for each voter. A vast literature uses random utility models for the elicitation of preferences \citep{horowitz1994advances,parkes2012random,azari2013generalized}. That is, an agent's preference is modeled by drawing, for each alternative, a real value from a parameterized distribution. The values assigned by this distribution generate a ranking of the alternatives. In our setting, since an autonomous system may face different scenarios, each of which with a finite set of alternatives (that may belong to an infinite set), a more general notion is needed \citep{caron2012bayesian}. That is why we consider general models known as permutation processes \citep{noothigattu2018voting}. These models allow us to generate rankings over an infinite set of alternatives (for example, if age is considered as a continuum), even if only a finite subset is relevant (as in the case of a car that may hit only a small number of people).\\ 

Once we have a model for each voter, as well as a summary of those models (Step {\it iii}), we need a voting rule\footnote{More formally known as \textit{Social Choice correspondence}.} to select the best alternatives. We consider two different conceptions of what it means that an alternative is the `best' one, adapting to this setting the ideas of dominance among alternatives of \cite{caragiannis2016noisy}. We present some efficiency properties that a good voting rule should satisfy in terms of these dominance notions. While in  `classical' domains, only a few rules satisfy them, in the case of permutation processes we prove many efficient rules exist. Moreover, all of them select the same alternative.\\

We also consider the property of stability, which ensures the consistency of decisions across different scenarios of variable information. Again, in the specific setting of permutation processes, many rules verify this property.\\ 


This paper may contribute in different ways to the design of automated decision-making systems. First, we extend the theory of aggregation of permutation processes \citep{prasad2015distributional,marden2019analyzing}, that \cite{noothigattu2018voting} have shown to be adequate models of decision-making processes in autonomous systems. Besides the emphasis on ethical dilemmas, we also present a `technical' contribution to the discipline of \textit{Machine Ethics}, the field concerned with the `ethical behavior' of autonomous intelligent systems operating in social environments \citep{winfield2014towards,dennis2015towards,moor2020mature}. \cite{bjorgen2018cake} have proposed that the estimation of the ethical performance of an autonomous system should be done based on the responses to some ethical dilemmas that they consider as benchmarks. The results presented in this work may help to set up these `benchmarks'. Finally, this paper contributes to the goal of achieving AI value alignment, which is the problem of ensuring that AI's are properly aligned with human values \citep{russell2019human}. \cite{gabriel2020artificial} considers that voting theory is necessary to address this goal, but the impossibility results in this field are not very promising. Our results avoid many impossibilities and thus facilitate the work of an AI designer who wants to justify the decisions of an autonomous system. Roughly speaking, as many (well-known) voting rules are suitable for its use by an automated system, the AI designer has at hand all the properties verified by these rules, to use as reasons why a specific decision has been made.\\

The plan of the paper is as follows. In Section \ref{model}, we introduce the general model of the aggregation of permutation processes. Sections \ref{pmdom} and \ref{posdomsec} introduce different dominance definitions and important properties that voting rules should verify. In Section~\ref{aggrepermsec} we consider the setting in which the aggregation of permutations yields positive results. In Section~\ref{stabsec} we introduce the concept of stability and find some rules satisfying it. Finally, Section \ref{concsec} concludes with a brief discussion about some future lines of research. Most of the definitions of voting rules and all the proofs can be found in the Appendix.

\section{Model}\label{model}
Following the line of \cite{prasad2015distributional} and \cite{noothigattu2018voting}, we consider a potentially infinite set of alternatives $X$. Let $A\subseteq X$ be a finite subset of size $k$. A total ordering over $A$ can be characterized by a bijection $\sigma: A \rightarrow \{1, \ldots, k\}$ such that $\sigma(a)=j$ indicates that the position of alternative $a$ in $\sigma$ is $j$. By a slight abuse of language, we call $\sigma$ a {\em permutation} of $A$.\footnote{We will also refer to permutations over sets of alternatives as votes or rankings.} Let $S_A$ be the set of all permutations of $A$. For each $\sigma \in S_A$, $\sigma(a)<\sigma(b)$ is interpreted as indicating that $a$ is more preferred than $b$ in the order characterized by $\sigma$. We denote this as $a>_{\sigma}b$.\\ 

With $\sigma|_B$ we denote the restriction of $\sigma$ to $B\subseteq A$. Given a probability distribution $P$ over $S_A$ and $B\subseteq A$, we define
$$P_B(\sigma')=\sum_{\sigma\in S_{A}:\sigma|_B=\sigma'}P(\sigma)$$
where $\sigma'\in S_B$. \\

A {\em permutation process} $\Pi$ is a collection of probability distributions over sets of permutations $S_B$ such that $B \subseteq A$, with $\Pi(A)$ being a distribution over $S_A$. A permutation process is {\it consistent} if $\Pi(A)|_B=\Pi(B)$ for any finite subset $B\subseteq A$, that is if the distribution over $S_B$ is obtained by marginalizing out the extra alternatives in $A\setminus B$ from the distribution over $S_A$. \\




A particularly useful representation of consistent permutation processes is in terms of utilities \citep{marden2019analyzing}. That is, given a stochastic process $U$, indexed by $X$, such that for any \mbox{$A=\{x_1,x_2,\ldots,x_k\}\subseteq X$}, we can define the probability of $\sigma\in S_A$ as the probability that \mbox{$sort(U_{x_1},U_{x_{2}},\ldots,U_{x_k})$} coincides with that of $>_{\sigma}$, where $sort$ is an operation that yields a linear ordering of $\{U_{x_j}\}_{j=1}^n$. Thus, $U: A \rightarrow \mathbb{R} $ can be conceived as a stochastic utility function.\\

We follow the line of \cite{parkes2012random} and focus our attention on permutation processes in which the random utilities are independently drawn from distributions in the exponential family like the Normal, Poisson, Gamma, Binomial, and Negative Binomial distributions \citep{morris1982natural}.\\

Consider a finite set of $N$ voters. The preferences of a voter $i$ over a finite set of alternatives $A$, is denoted by $\sigma_i\in S_A$. A preference profile is defined as a collection of $N$ individual rankings, $\bar{\sigma} = (\sigma_1,\ldots,\sigma_N)$. In our setting, the identity of the voters is not important. So we can consider an anonymous profile $\pi_A\in[0,1]^{|A|!}$, where, for each $\sigma\in S_A$, $\pi_A(\sigma)\in[0,1]$ is the probability that a random voter has the ranking $\sigma$.\footnote {We omit subscripts unless they are specifically needed.} Without risk of confusion, with $\sigma\in\pi_A$ we refer to the orders $\sigma\in S_A$ such that $\pi_A(\sigma)>0$.\\

The relation between an anonymous preference profile and a permutation process over a finite subset $A\subseteq X$, is that for $\sigma\in S_A$, $\pi_A(\sigma)$ is the probability of $\sigma$ in $\Pi(A)$\footnote{Accordingly, we refer to $\Pi(A)$ as an anonymous preference profile.}.\\

We want to analyze aspects of the aggregation of preferences induced by a permutation process, defined by the selection of the winning alternatives. We define a \textit{social choice correspondence} (SCC) as a function $f$ that maps an anonymous preference profile defined over a finite subset $A\subseteq X$ into a nonempty subset of $A$. Examples of SCC are the \textit{plurality} rule, which selects the alternatives that are on the top of the largest number of individual preferences, $argmax_{a\in A}\sum_{\sigma\in S_A:\sigma(a)=1}\pi(\sigma)$, or the \textit{antiplurality} rule, that selects the alternatives that are the least preferred by the smallest number of individual preferences,  $argmin_{a\in A}\sum_{\sigma\in S_A:\sigma(a)=k}\pi(\sigma)$.\\

\section{Pairwise Majority Dominance}\label{pmdom}
We can define what it means for an alternative to be better than another one in aggregate terms, based on the concept of {\it majoritarian power}. That is, an alternative $a$ is better than an alternative $b$, if $a$ is preferred to $b$ by a majority of the individuals, i.e., a group of cardinality larger or equal than $|\frac{N}{2}|$. We have the following notion of dominance: 

\begin{defi}
An alternative $a\in A$ pairwise majority-dominates (PM-dominates) another alternative $b\in A$ in an anonymous preference profile $\pi$ over $A$, denoted  $a\rhd^{pm}_{\pi}b$, if $|\{\sigma\in \pi:a>_{\sigma}b\}|\geq|\{\sigma\in \pi:b>_{\sigma}a\}|$.
\end{defi}

This binary relation is complete but not transitive, due to the possible existence of Condorcet cycles.\footnote{A Condorcet cycle or `Condorcet Paradox' occurs when $a\rhd^{pm}_{\pi}b$, $b\rhd^{pm}_{\pi}c$ and $c\rhd^{pm}_{\pi}a$.}\\

The following is a desirable property to be satisfied by a social choice correspondence. Intuitively, it means that the voting rule is `consistent' with the dominance relation: if it selects an alternative $b$, then it must also select the alternatives that are better than $b$.

\begin{defi}
An anonymous SCC $f$ is said to be pairwise majority-dominance-efficient (PMD-efficient) if for every anonymous preference profile $\pi$ and any two alternatives $a,b\in A$, if $a\rhd^{pm}_{\pi}b$, then $b\in f(\pi)$ implies that $a\in f(\pi)$.
\end{defi}

A stronger requirement states, roughly, that dominated alternatives should not be selected, unless they also dominate the alternatives that dominate them.

\begin{defi}
An anonymous SCC $f$ is said to be strongly PMD-efficient if for every anonymous preference profile $\pi$ over $A$, and any two alternatives $a,b\in A$ such that $a\rhd^{pm}_{\pi}b$, we have:\\
-if $b\ntriangleright_{\pi}^{pm}a$, then $b\notin f(\pi)$,\\
-if $b\rhd_{\pi}^{pm}a$, then $b\in f(\pi)$ if and only if $a\in f(\pi)$.
\end{defi}

Strong PMD-efficiency implies PMD-efficiency. This notion sets the bar too high on SCCs since there are no rules satisfying it on unrestricted domains. This problem appears because Condorcet cycles can arise in those domains. But even efficiency is a very demanding condition:\footnote{The definitions of the voting rules can be found in the Appendix. The definitions may also be found in \cite{fishburn1977condorcet} and \cite{felsenthal2014weak}.}

\begin{propi}\label{pocasopciones}
The Black, Dodgson, Young, Kemeny, Nanson, Minimax, and Fishburn rules are not PMD-efficient voting rules.
\end{propi}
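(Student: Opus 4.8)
The plan is to exhibit, for each of the seven rules (Black, Dodgson, Young, Kemeny, Nanson, Minimax, Fishburn), an anonymous preference profile $\pi$ over a small alternative set $A$ together with two alternatives $a, b \in A$ such that $a \rhd^{pm}_{\pi} b$ but $b \in f(\pi)$ while $a \notin f(\pi)$. The natural source of such counterexamples is profiles containing a Condorcet cycle among three ``strong'' alternatives, with one extra alternative $b$ that is majority-dominated by some cycle member $a$ yet still gets selected because these rules score alternatives by aggregate criteria (total pairwise margins, Kemeny distance, Dodgson/Young scores, etc.) that are insensitive to a single pairwise defeat. So first I would try to engineer one master profile — likely on $|A| = 4$ alternatives, say $\{a, b, c, d\}$ with $a, c, d$ forming a Condorcet cycle and $b$ placed so that $a \rhd^{pm}_{\pi} b$ — and check which of the seven rules it already refutes.

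Second, for the rules not handled by the master profile, I would build tailored profiles. For \emph{Kemeny} and \emph{Nanson}, one wants $b$ to survive because removing it would not reduce the total Kendall-tau disagreement / because $b$'s Borda-type score stays above the elimination threshold, even though $a$ beats $b$ head-to-head; a cyclic core plus a near-symmetric placement of $b$ does this. For \emph{Minimax}, the winner minimizes the worst pairwise defeat, so I would arrange that $b$'s worst defeat is strictly smaller than $a$'s worst defeat (inside the cycle $a$ takes a bad beating from another cycle member), while still $a \rhd^{pm}_{\pi} b$. For \emph{Black} (Condorcet winner if one exists, else Borda), I would ensure no Condorcet winner exists and that $b$'s Borda score exceeds $a$'s. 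For \emph{Dodgson} and \emph{Young}, the scores count swaps / voter deletions needed to make an alternative a Condorcet winner; I would place $a$ deep inside the cycle so it is ``far'' from being a Condorcet winner while $b$, being dominated only by $a$, is ``closer,'' making $b$ the Dodgson/Young winner. \emph{Fishburn}'s rule (based on the covering relation) needs $b$ uncovered while $a$ is covered, which again a carefully tuned 4- or 5-alternative cyclic profile supplies.

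Third, for each constructed profile I would (i) write out the head-to-head majority margins as a table, (ii) verify $a \rhd^{pm}_{\pi} b$, (iii) compute the relevant score for each alternative under the rule in question using the definitions recalled in the Appendix, and (iv) conclude $b \in f(\pi)$, $a \notin f(\pi)$. Since the profiles are small and the rule definitions are explicit, these are finite checks; I would present them compactly, perhaps reusing one profile for several rules to keep the proof short, and defer the full verification tables to the Appendix as the footnote promises.

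The main obstacle I anticipate is \emph{not} the existence of counterexamples — Condorcet-type pathologies make them plentiful — but finding profiles small enough that the score computations stay short and, ideally, a \emph{single} profile (or very few) that simultaneously breaks as many of the seven rules as possible; Dodgson and Young in particular have scores that are mildly awkward to compute by hand, so getting a clean minimal example for them, and double-checking that ties in the scores do not accidentally pull $a$ back into $f(\pi)$, will require the most care.
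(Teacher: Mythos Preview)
Your plan is correct and matches the paper's approach exactly: the proof in the Appendix proceeds by exhibiting explicit small profiles (one 5-alternative, 3-voter profile reused for Black, Dodgson, Young and Kemeny; separate tailored profiles for Nanson, Minimax and Fishburn) in which $a\rhd^{pm}_{\pi}b$ yet the rule selects $b$ and not $a$. Your anticipation that one master profile will not handle all seven rules and that Dodgson/Young require the most care is borne out; be aware, though, that four alternatives may not suffice for the shared Black/Dodgson/Young/Kemeny example, and that the Fishburn counterexample in the paper exploits ties rather than a strict covering relation, so your ``$b$ uncovered while $a$ is covered via a cycle'' heuristic will likely need adjusting.
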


This means that this notion of efficiency is not satisfied by many (important and well-studied) voting rules. An alternative intuition is that an ``efficient'' rule should select the `most' dominating alternatives, or at least, the ones that are the `least' dominated. The {\em Schwartz set} captures this idea \citep{schwartz1972rationality}:

\begin{defi}
The \textit{Schwartz set} of a profile of preferences $\pi$, denoted $Sc(\pi)$, is the minimal set $Sc(\pi) \subseteq A$ such that $ Sc(\pi) \neq \emptyset$ verifying that for any $a \in Sc(\pi)$ there is no $b \in A \setminus Sc(\pi)$ such that $b\rhd_{\pi}^{pm}a$.
\end{defi} 

This set is nonempty for any profile of preferences. The \textit{Schwartz voting rule} is defined as the SCC that selects as winning alternatives the entire Schwartz set. The following result follows immediately from the definition of a Schwartz set:

\begin{propi}
The Schwartz voting rule is PMD-efficient.
\end{propi}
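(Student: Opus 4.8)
The plan is to prove the statement by a direct contradiction argument that simply unwinds the definition of the Schwartz set. Fix an anonymous preference profile $\pi$ over a finite set $A$ and alternatives $a,b\in A$ with $a\rhd^{pm}_{\pi}b$, and suppose $b\in f(\pi)=Sc(\pi)$. The goal is to conclude $a\in Sc(\pi)$, which is exactly what PMD-efficiency demands.

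First I would assume, towards a contradiction, that $a\notin Sc(\pi)$, i.e.\ $a\in A\setminus Sc(\pi)$. Then the pair consisting of the element $b\in Sc(\pi)$ and the element $a\in A\setminus Sc(\pi)$ witnesses a violation of the characterizing property of $Sc(\pi)$: by hypothesis $a\rhd^{pm}_{\pi}b$, so there exists an alternative outside $Sc(\pi)$ (namely $a$) that PM-dominates an alternative inside $Sc(\pi)$ (namely $b$). This contradicts the defining clause of the Schwartz set, which states that for every $a'\in Sc(\pi)$ there is no $b'\in A\setminus Sc(\pi)$ with $b'\rhd^{pm}_{\pi}a'$ (instantiated here with $a'=b$ and $b'=a$). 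Hence $a\in Sc(\pi)$, and PMD-efficiency follows.

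Two small points deserve mention. One must invoke that $Sc(\pi)$ is well defined and nonempty for every profile, which is exactly the remark stated just before the proposition; and one only needs the closure clause of the definition, not the minimality of $Sc(\pi)$, for this implication. The main ``obstacle'' is essentially cosmetic: there is no real difficulty here, since the Schwartz set was designed precisely to be closed from below under PM-domination by outsiders. The only thing to be careful about is matching the quantifiers in the definition correctly, namely that in the forbidden configuration the \emph{dominated} alternative is the one lying in $Sc(\pi)$ while the \emph{dominating} one is the putative outsider.
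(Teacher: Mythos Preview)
Your proof is correct and is precisely the argument the paper has in mind: the paper simply states that the result ``follows immediately from the definition of a Schwartz set'' without writing out any further details, and your contradiction argument is exactly the unpacking of that remark. There is nothing to add or compare.
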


\section{Position Dominance}\label{posdomsec}
In this section, we consider a different notion of dominance. Namely, we say that an alternative $a$ dominates an alternative $b$ if $a$ is positioned higher in more rankings than $b$. Formally:\footnote{We adapt the definition introduced by \cite{caragiannis2016noisy}, allowing ties.}

\begin{defi}
Given an anonymous preference profile $\pi$ on $A$, an alternative $a\in A$, $i\in N$ and $j\in\{1,\ldots,|A|\}$, let $ s_j(a)=|\{i:\sigma_i(a)\leq j\}|$. That is, $s_j(a)$ is the number of voters that rank alternative $a$ at a position $j$ or lower. For $a,b\in A$, we say that $a$ position-dominates (Pos-dominates) $b$, denoted by $a\rhd^{pos}_{\pi}b$, if $s_{j}(a)\geq s_{j}(b)$ for every $j\in\{1,\ldots,|A|\}$.

\end{defi}
This binary relation is not complete, since there may exist alternatives that are not comparable under a given profile. But Pos-dominance is transitive, ruling out the possibility of cycles.\\

The following example shows that neither one of the dominance relations presented above (PM and Pos) implies the other.

\begin{exa}
In the preference profile $\{a>_1 b >_1 c,\ a>_2 b >_2 c,\ b>_3 c>_3 a\}$ we have that $a\rhd^{pm}_{\pi}b$ and $a\ntriangleright_{\pi}^{pos}b$, while in the profile $\{a>_1 c>_1 d>_1 b,\ b>_2 a >_2 c >_2 d,\ c>_3 d>_3 b>_3 a\}$ we have $a\rhd^{pos}_{\pi}b$ and $a\ntriangleright_{\pi}^{pm}b$.
\end{exa}

The properties of Position Dominance-efficient (PosD-efficient) SCC and strongly Position Dominance-efficient (strongly PosD-efficient) SCC are analogous to those of PMD-efficient and strongly PMD-efficient SCC. We have that:\footnote{The definitions of the voting rules in this section can be found in the Appendix.} 

\begin{propi}\label{buck}
The Bucklin SCC is PosD-efficient but not strongly PosD-efficient.
\end{propi}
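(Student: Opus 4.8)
The plan is to establish the two halves of the claim separately. Recall that the Bucklin SCC works as follows: for a profile $\pi$ over $A$, find the smallest level $j^\ast$ (the \emph{Bucklin level}) such that some alternative has a strict majority of voters ranking it in position $j^\ast$ or higher, i.e. $s_{j^\ast}(a) > N/2$ for some $a$; the winners are exactly those alternatives $a$ maximizing $s_{j^\ast}(a)$ among all alternatives with $s_{j^\ast}(a) > N/2$ (some variants take all alternatives reaching the majority threshold at level $j^\ast$ — I will fix the precise variant used in the Appendix and argue for that one).

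For PosD-efficiency, suppose $a \rhd^{pos}_\pi b$ and $b \in f(\pi)$, where $f$ is Bucklin. Let $j^\ast$ be the Bucklin level. Since $b$ is a winner, $b$ reaches the majority threshold at level $j^\ast$, so $s_{j^\ast}(b) > N/2$, and $b$ maximizes $s_{j^\ast}(\cdot)$ over the threshold-reaching alternatives. By position dominance, $s_{j^\ast}(a) \ge s_{j^\ast}(b) > N/2$, so $a$ also reaches the threshold at level $j^\ast$ (hence the Bucklin level is not later than $j^\ast$ because of $a$ either, and since $b$ already forces it to be exactly $j^\ast$, it \emph{is} $j^\ast$). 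Moreover $s_{j^\ast}(a) \ge s_{j^\ast}(b) = \max$, so $a$ is also a maximizer, giving $a \in f(\pi)$. The only subtlety is ruling out that $a$'s dominance could force the Bucklin level to be \emph{strictly smaller} than $j^\ast$ and thereby change the winning set in a way that excludes $b$ — but that cannot happen, since $b \in f(\pi)$ already pins the Bucklin level at $j^\ast$; I would state this carefully.

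For the failure of strong PosD-efficiency, I would exhibit a single profile $\pi$ and alternatives $a, b$ with $a \rhd^{pos}_\pi b$, $b \ntriangleright^{pos}_\pi a$, yet $b \in f(\pi)$ (the definition of strong efficiency demands $b \notin f(\pi)$ in this case). The idea is to engineer a profile in which the Bucklin level $j^\ast$ comes early, several alternatives — including both $a$ and a ``bad'' alternative $b$ — tie for the maximum value of $s_{j^\ast}(\cdot)$, so $b$ is a Bucklin winner, while at some later position $j > j^\ast$ we have $s_j(a) > s_j(b)$ strictly, witnessing $a \rhd^{pos}_\pi b$ with non-reciprocal dominance. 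A small instance with $3$–$4$ alternatives and $3$–$5$ voters should suffice; I expect the four-alternative profile already appearing in the Example in Section~\ref{posdomsec}, or a light modification of it, to work, since there $a \rhd^{pos}_\pi b$ holds non-reciprocally and one only needs to check the Bucklin computation puts $b$ among the winners.

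The main obstacle is purely bookkeeping: constructing the counterexample so that the ties at the Bucklin level land exactly on the alternatives one wants, while preserving strict position dominance at a later coordinate — this requires checking all $|A|$ inequalities $s_j(a) \ge s_j(b)$ and the Bucklin computation simultaneously. There is no conceptual difficulty; the positive direction is essentially immediate from the definitions once the Bucklin-level argument above is phrased correctly.
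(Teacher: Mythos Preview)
Your argument for PosD-efficiency is correct and is essentially the paper's proof, just phrased in terms of the global Bucklin level $j^\ast$ rather than individual Bucklin scores $B(\cdot)$; the paper simply observes that if $B(b)=t$ is minimal then $s_t(a)\ge s_t(b)\ge$ majority forces $B(a)\le t$, hence $a$ is also a winner.

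For the failure of strong PosD-efficiency there are two points worth flagging. First, the paper's Bucklin rule is the simpler variant: the winners are \emph{all} alternatives with minimum Bucklin score, with no further $\arg\max s_{j^\ast}$ tie-break. Under that variant the paper's counterexample is the four-voter, three-alternative profile $(a>b>c,\ a>b>c,\ b>a>c,\ c>a>b)$: here $s_1(a)=2>1=s_1(b)$ and $s_2(a)=4>3=s_2(b)$, so $a$ strictly Pos-dominates $b$, yet $B(a)=B(b)=2$ and both are selected. Note that this profile would \emph{not} work for your $\arg\max$ variant, since $s_2(a)>s_2(b)$ would eliminate $b$; so be sure to commit to the paper's variant before building the example.

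Second, your suggested shortcut of reusing the profile from the Example in Section~\ref{posdomsec} does not work: in that four-alternative, three-voter profile one computes $B(a)=B(c)=2$ but $B(b)=3$, so $b$ is not a Bucklin winner and the profile witnesses nothing here. More generally, your plan to place the strict inequality at a \emph{later} position $j>j^\ast$ is fragile: with $|A|$ alternatives one always has $s_{|A|}(x)=N$ for every $x$, so in small instances (in particular $|A|=3$, $j^\ast=2$) there is no room after $j^\ast$. The paper's example, and any small example you build, will place the strict inequality at some $j\le j^\ast$ instead.
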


Moreover, there exists an entire family of PosD-efficient rules:

\begin{propi}\label{effiposd}
All positional scoring rules are PosD-efficient.
\end{propi}



In turn, the strong PosD-efficiency of scoring rules depends on their associated weights:

\begin{propi}\label{strongeffiposd}
Scoring rules with associated decreasing weight vectors are strongly PosD-efficient.
\end{propi}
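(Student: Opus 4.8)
The plan is to unwind what strong PosD-efficiency demands and show that a positional scoring rule with a (weakly) decreasing weight vector $(w_1 \geq w_2 \geq \cdots \geq w_{|A|})$ meets both clauses of the definition. Let $A$ have size $n$, write $w = (w_1,\dots,w_n)$, and for an anonymous profile $\pi$ let $Sc_w(a) = \sum_{j=1}^{n} w_j \cdot |\{i : \sigma_i(a) = j\}|$ denote the total score of alternative $a$; the rule selects $\operatorname{argmax}_{a \in A} Sc_w(a)$. The key is a summation-by-parts (Abel) identity that rewrites the score in terms of the cumulative counts $s_j(a) = |\{i : \sigma_i(a) \leq j\}|$ appearing in the definition of Pos-dominance: since $|\{i:\sigma_i(a)=j\}| = s_j(a) - s_{j-1}(a)$ with the convention $s_0(a)=0$, one gets
\[
Sc_w(a) \;=\; \sum_{j=1}^{n} w_j\bigl(s_j(a) - s_{j-1}(a)\bigr) \;=\; \sum_{j=1}^{n-1} (w_j - w_{j+1})\, s_j(a) \;+\; w_n\, s_n(a),
\]
and because every voter ranks every alternative somewhere, $s_n(a) = N$ for all $a$, so the last term is a constant independent of $a$.

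First I would establish clause one. Suppose $a \rhd^{pos}_{\pi} b$ but $b \ntriangleright^{pos}_{\pi} a$; I must show $b \notin f(\pi)$, i.e. $b$ is not a maximizer of $Sc_w$. From $a \rhd^{pos}_\pi b$ we have $s_j(a) \geq s_j(b)$ for all $j$, and since $b$ does not Pos-dominate $a$ there is some index $k$ with $s_k(a) > s_k(b)$ (a strict inequality at at least one position). Plugging into the Abel identity, $Sc_w(a) - Sc_w(b) = \sum_{j=1}^{n-1}(w_j - w_{j+1})\bigl(s_j(a) - s_j(b)\bigr)$; every coefficient $w_j - w_{j+1}$ is $\geq 0$ by the decreasing hypothesis, and every factor $s_j(a)-s_j(b)$ is $\geq 0$, so the sum is $\geq 0$, giving $Sc_w(a) \geq Sc_w(b)$ — hence whenever $b$ is selected, so is $a$, which is already PosD-efficiency and consistent with Proposition~\ref{effiposd}. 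For the \emph{strong} conclusion I need $b$ excluded outright, so I would argue that $Sc_w(a) > Sc_w(b)$ strictly: the strict inequality $s_k(a) > s_k(b)$ contributes a strictly positive term \emph{provided} $w_k - w_{k+1} > 0$. This is exactly where the argument is delicate — if the weight vector is only weakly decreasing, the coefficient at the critical position could vanish and $a$ and $b$ could tie. The clean resolution, and the step I expect to be the main obstacle, is to show that the set of positions where $s_j(a) > s_j(b)$ must contain one at which consecutive weights strictly differ; alternatively (and more safely) one reads "decreasing" in the statement as strictly decreasing, so that $w_j - w_{j+1} > 0$ for every $j < n$, and then the single strict index $k$ forces $Sc_w(a) > Sc_w(b)$, so $b$ cannot be a maximizer and $b \notin f(\pi)$. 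I would flag this reading explicitly.

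For clause two, suppose $a \rhd^{pos}_\pi b$ and also $b \rhd^{pos}_\pi a$. Then $s_j(a) \geq s_j(b)$ and $s_j(b) \geq s_j(a)$ for all $j$, hence $s_j(a) = s_j(b)$ for every $j$, and the Abel identity immediately gives $Sc_w(a) = Sc_w(b)$. Since the two scores are equal, $a$ is a maximizer of $Sc_w$ if and only if $b$ is, i.e. $a \in f(\pi) \iff b \in f(\pi)$, which is precisely the second requirement. Finally I would assemble the two clauses into the definition of strong PosD-efficiency and note that strong PosD-efficiency implies PosD-efficiency, consistent with Proposition~\ref{effiposd}; the only genuine content beyond that proposition is the strictness in clause one, which is carried entirely by the strict positivity of the successive weight differences via the summation-by-parts rewriting.
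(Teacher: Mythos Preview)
Your argument is correct and is essentially the same as the paper's: both rewrite the positional score via the Abel/summation-by-parts identity $Sc_w(a)=\sum_{j<n}(w_j-w_{j+1})s_j(a)+\text{const}$ and then read off the two clauses from the signs of the weight differences. The paper only spells this out for $|A|=3$ (claiming ``without loss of generality''), whereas you carry out the general $n$; your concern about the reading of ``decreasing'' is resolved by the paper's Appendix, which defines an associated decreasing weights vector by the strict condition $\alpha_i>\alpha_{i+1}$, so the coefficient at your critical index $k$ is automatically positive.
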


Notice that \cite{noothigattu2018voting} introduced a stronger notion,  {\it swap-dominance}, according to which an alternative $a$ dominates alternative $b$ if every ranking that places $a$ above $b$ has at least as much weight as the ranking obtained by swapping the positions of $a$ and $b$, and keeping everything else fixed. That is, for every $\sigma\in S_A$ such that $a>_\sigma b$, we have that $\pi(\sigma)>\pi(\sigma')$, where $\sigma'$ is the ranking $\sigma$ with $a$ and $b$ swapped. Swap dominance implies both pairwise majority- and position-dominance. Most of our results are still valid if we consider this notion.

\section{Aggregation of Permutation Processes}\label{aggrepermsec}
So far, we introduced two different notions of dominance, rather limited in their efficiency. For richer results, we can turn our attention to the {\em aggregation} of permutation processes. We will show that if the permutation process verifies a very natural property, many important rules become efficient and they all select the same best alternatives. 

\begin{defi}
An alternative $a\in X$ PM-dominates (respectively Pos-dominates) an alternative $b\in X$ in the permutation process $\Pi$, denoted  $a\rhd_{\Pi}^{pm} b$ $(a\rhd^{pos}_{\Pi}b)$, if for every finite set of alternatives $A\subseteq X$ such that $a,b\in A$, we have that $a$ PM-dominates (Pos-dominates) $b$ in the anonymous preference profile $\Pi(A)$.
\end{defi}

\begin{defi}
A permutation process $\Pi$ over $X$ is said to be PM-compatible (Pos-compatible) if for every $A\subseteq X$, the binary relation $\rhd_{\Pi}^{pm}|_{A} $ $(\rhd^{pos}_{\Pi}|_{A})$ is a total preorder over $A$.\footnote{A total preorder is a transitive and complete binary relation.}
\end{defi}

When a permutation process is PM-compatible, the total preorder is consistent across all the different finite subsets of $X$:

\begin{lema}\label{consist}
	Let $\Pi$ be a consistent permutation process that is PM-compatible. Then, for any finite subset of alternatives $A\subseteq X$, $(\rhd_{\Pi(A)}^{pm})=(\rhd_{\Pi}^{pm}|_{A})$
\end{lema}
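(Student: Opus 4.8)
The plan is to prove the two inclusions $(\rhd_{\Pi}^{pm}|_{A})\subseteq(\rhd_{\Pi(A)}^{pm})$ and $(\rhd_{\Pi(A)}^{pm})\subseteq(\rhd_{\Pi}^{pm}|_{A})$ separately, the first being immediate from the definitions and the second resting on the observation that PM-dominance between a fixed pair of alternatives is controlled entirely by a single pairwise marginal.

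For the first inclusion, suppose $a\rhd_{\Pi}^{pm}b$ for some $a,b\in A$. By the definition of $\rhd_{\Pi}^{pm}$, the alternative $a$ PM-dominates $b$ in the anonymous profile $\Pi(A')$ for \emph{every} finite $A'\subseteq X$ with $a,b\in A'$. Specializing to $A'=A$ yields $a\rhd_{\Pi(A)}^{pm}b$, which gives $(\rhd_{\Pi}^{pm}|_{A})\subseteq(\rhd_{\Pi(A)}^{pm})$.

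For the second (and main) inclusion, the first step I would establish is that, for any finite $A'\subseteq X$ containing $a$ and $b$, the head-to-head tallies appearing in the definition of $\rhd^{pm}$ depend only on the marginal $\Pi(A')|_{\{a,b\}}$. Indeed, $S_{\{a,b\}}$ has exactly two elements, the ordering $\sigma^{ab}$ with $a>_{\sigma^{ab}}b$ and the ordering $\sigma^{ba}$ with $b>_{\sigma^{ba}}a$, and unwinding the marginalization formula $P_B(\sigma')=\sum_{\sigma\in S_{A'}:\sigma|_B=\sigma'}P(\sigma)$ with $B=\{a,b\}$ shows that $\Pi(A')|_{\{a,b\}}(\sigma^{ab})$ is exactly the weight of $\{\sigma\in S_{A'}:a>_\sigma b\}$ under $\Pi(A')$, and likewise $\Pi(A')|_{\{a,b\}}(\sigma^{ba})$ is the weight of $\{\sigma\in S_{A'}:b>_\sigma a\}$. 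Hence $a\rhd_{\Pi(A')}^{pm}b$ holds if and only if $\Pi(A')|_{\{a,b\}}(\sigma^{ab})\geq\Pi(A')|_{\{a,b\}}(\sigma^{ba})$. Now invoking consistency of $\Pi$, we have $\Pi(A')|_{\{a,b\}}=\Pi(\{a,b\})$ for \emph{every} such $A'$, so this common marginal — and therefore the truth value of ``$a$ PM-dominates $b$ in $\Pi(A')$'' — does not depend on $A'$. Consequently, if $a\rhd_{\Pi(A)}^{pm}b$, the equivalence applied with $A'=A$ shows $\Pi(\{a,b\})(\sigma^{ab})\geq\Pi(\{a,b\})(\sigma^{ba})$, and the same equivalence applied to an arbitrary finite $A'\ni a,b$ then gives $a\rhd_{\Pi(A')}^{pm}b$ for all such $A'$, i.e. $a\rhd_{\Pi}^{pm}b$, and hence $a\rhd_{\Pi}^{pm}|_{A}b$.

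I expect the only genuine work to be the bookkeeping in the last paragraph: checking carefully that restricting a distribution on $S_{A'}$ to the two-element subset $\{a,b\}$ reproduces exactly the pairwise tallies used to define PM-dominance (this also handles the treatment of ties, since the defining inequality is non-strict). It is worth noting that PM-compatibility of $\Pi$ is not strictly needed for the equality of relations itself; it serves in the surrounding discussion to guarantee that both $\rhd_{\Pi}^{pm}|_{A}$ and $\rhd_{\Pi(A)}^{pm}$ are genuine total preorders, while the actual driver of the lemma is the consistency hypothesis.
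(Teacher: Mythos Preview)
Your argument is correct and rests on the same mechanism as the paper's proof: the head-to-head inequality defining $a\rhd^{pm}_{\Pi(A')}b$ depends only on the two-element marginal $\Pi(A')|_{\{a,b\}}$, which by consistency equals $\Pi(\{a,b\})$ regardless of $A'$. The paper wraps this in a proof by contradiction---assuming $a\rhd^{pm}_{\Pi(A)}b$ yet $a\ntriangleright^{pm}_{\Pi(C)}b$ for some $C$, and then restricting both to $S_{\{a,b\}}$ to obtain incompatible inequalities---while you state the equivalence directly; the substance is identical. Your final observation is also accurate: the paper invokes PM-compatibility to get from $a\ntriangleright^{pm}_{\Pi(C)}b$ to $b\rhd^{pm}_{\Pi(C)}a$, but since $\rhd^{pm}$ on any fixed profile is complete by its very definition (the defining inequality is non-strict), that step requires no extra hypothesis, and the equality of relations follows from consistency alone.
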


This result allows us to write $\Pi$ instead of $\Pi(A)$ whenever we are referring to PM-dominance relations. The same is not true for a permutation process that is Pos-compatible. The dominance order of the alternatives may not be consistent between different subsets of $X$, as shown in the following example.

\begin{exa}
Consider the set of alternatives $A=\{a,b,c\}$ and the profile of preferences where one voter chooses $a>b>c$ and another one $b>c>a$. Then we have that the only possible order is $b\rhd_{\Pi(A)}^{pos}a\rhd_{\Pi(A)}^{pos}c$. If we now consider $B=\{a,b\}$, we obtain that  $b\rhd_{\Pi(B)}^{pos}a$ and $a\rhd_{\Pi(B)}^{pos}b$.
\end{exa}

We can introduce a new definition of compatibility, according to which the orders are consistent over all the subsets.
\begin{defi}
A permutation process $\Pi$ over $X$ is said to be strongly PM-compatible (Pos-compatible) if it is PM-compatible (Pos-compatible) and the preorder $\rhd_{\Pi}$ is consistent across all subsets  $A\subseteq X$.	
\end{defi}

It is clear from Lemma \ref{consist} that if a permutation process is PM-compatible it is also strongly PM-compatible.\\

The next result states that if the permutation process is compatible, then under an efficient voting rule one of the winning alternatives is the one that dominates the rest of the alternatives.

\begin{teo}\label{altgan}
	Let $f$ be an anonymous PMD-efficient SCC, and let $\Pi$ be a PM-compatible permutation process. Then, for any finite subset of alternatives $A$, there exists an $a\in A$ such that $a\rhd_{\Pi}^{pm}b$ for all $b\in A$. Moreover, $a\in f(\Pi(A))$.\\
	
	Let $f$ be an anonymous PosD-efficient SCC, and let $\Pi$ be a Pos-compatible permutation process. Then, for any finite subset of alternatives $A$, there exists an $a\in A$ such that $a\rhd_{\Pi(A)}^{pos}b$ for all $b\in A$. Moreover, $a\in f(\Pi(A))$.\\
	
	If $\Pi$ is strongly Pos-compatible, there exists an $a\in A$ such that $a\rhd_{\Pi}^{pos}b$ for all $b\in A$. Moreover, $a\in f(\Pi(A))$
\end{teo}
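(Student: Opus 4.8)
The three statements follow a common template: one exhibits a dominating alternative as a greatest element of a suitable total preorder on the finite set $A$, and then uses the efficiency of $f$ to force that alternative into $f(\Pi(A))$. The general fact invoked each time is that a total preorder on a finite nonempty set has a greatest element; I would prove this by passing to the strict part of the preorder, which is a strict partial order, picking one of its maximal elements (there is one, since $A$ is finite), and observing that completeness promotes a maximal element to a greatest one.

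For the pairwise-majority part, since $\Pi$ is PM-compatible the relation $\rhd^{pm}_{\Pi}|_{A}$ is by definition a total preorder on the finite nonempty set $A$, so it has a greatest element $a$, i.e. $a\rhd^{pm}_{\Pi}b$ for every $b\in A$; this is already the first claim. Applying the definition of $\rhd^{pm}_{\Pi}$ with the witnessing finite set taken to be $A$ itself then yields $a\rhd^{pm}_{\Pi(A)}b$ for every $b\in A$. Since $f$ is an SCC, $f(\Pi(A))\neq\emptyset$; fix any $c\in f(\Pi(A))$. As $a\rhd^{pm}_{\Pi(A)}c$ and $f$ is PMD-efficient, $c\in f(\Pi(A))$ forces $a\in f(\Pi(A))$, which is the ``moreover''. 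The position-dominance part with $\Pi$ merely Pos-compatible runs the same argument inside the single profile $\Pi(A)$: the relation $\rhd^{pos}_{\Pi(A)}$ is transitive (position dominance in a fixed profile admits no cycles) and is complete on $A$ by Pos-compatibility --- directly, or because it contains the complete relation $\rhd^{pos}_{\Pi}|_{A}$ --- hence a total preorder; its greatest element $a$ satisfies $a\rhd^{pos}_{\Pi(A)}b$ for all $b\in A$, and PosD-efficiency then gives $a\in f(\Pi(A))$ exactly as before. Here the conclusion must be phrased with $\rhd^{pos}_{\Pi(A)}$ and not $\rhd^{pos}_{\Pi}$, since by the example preceding the theorem these two relations can genuinely disagree across subsets. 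For the strongly Pos-compatible case the extra hypothesis is precisely that the within-profile position-dominance orders are consistent across all subsets, i.e. $\rhd^{pos}_{\Pi(A)}=\rhd^{pos}_{\Pi}|_{A}$ for every finite $A\subseteq X$ (the position analogue of Lemma~\ref{consist}, here postulated rather than derived); combining this with the previous step upgrades $a\rhd^{pos}_{\Pi(A)}b$ to $a\rhd^{pos}_{\Pi}b$ for all $b\in A$, while $a\in f(\Pi(A))$ is unchanged.

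I do not anticipate a genuine difficulty: the mathematical content amounts to ``a total preorder on a finite set has a greatest element'' together with the definitions of PMD- and PosD-efficiency. The only place calling for care is the bookkeeping among the three relations $\rhd_{\Pi}$, $\rhd_{\Pi(A)}$ and $\rhd_{\Pi}|_{A}$ --- in particular using the trivial inclusion $\rhd^{pm}_{\Pi}|_{A}\subseteq\rhd^{pm}_{\Pi(A)}$ (and its position analogue) in the correct direction, and being explicit that on the PM side the global conclusion $a\rhd^{pm}_{\Pi}b$ is obtained for free by reading it off $\rhd^{pm}_{\Pi}|_{A}$, whereas on the position side it is exactly the strong-compatibility hypothesis, and nothing weaker, that makes the analogous global statement available.
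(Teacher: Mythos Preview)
Your proposal is correct and follows essentially the same approach as the paper's proof: obtain a greatest element from the total preorder on the finite set $A$, then use nonemptiness of $f(\Pi(A))$ together with efficiency to place that element in the output. If anything, you are more careful than the paper in distinguishing $\rhd_{\Pi}|_{A}$ from $\rhd_{\Pi(A)}$ and in explaining why strong Pos-compatibility is what upgrades the conclusion in the third part.
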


In the light of Proposition \ref{pocasopciones}, it may seem that Theorem \ref{altgan} is not very relevant since there are not too many PM-efficient voting rules. But this is no longer the case, as shown by the following result, for PM-compatible permutation processes. As a first step, let us recall that, according to \cite{fishburn1977condorcet}, a rule verifies the Strict Condorcet Principle if it selects the alternatives that beat or tie with every other candidate.\footnote{The undefeated alternatives are called in the literature as Weak Condorcet winners or Quasi-Condorcet candidates.}

	Let $\Pi$ be a PM-compatible and consistent permutation process. Then the Black, Nanson, Dodgson, Young, Minimax, Kemeny, Fishburn and Schwartz rules select $f(\Pi(A))=\{a\in A:a\rhd_{\Pi}^{pm}b\ \mbox{for all} \ b\in A\}$. Moreover, these rules are strongly PM-efficient.
\begin{teo}\label{strongeffi2}
	Let $\Pi$ be a PM-compatible and consistent permutation process. Then a rule is strongly PM-efficient if and only if it verifies the Strict Condorcet Principle.
\end{teo}

Note that according to the results in \cite{fishburn1977condorcet} and \cite{felsenthal2014weak}, the Black, Dodgson, Young, Minimax, and Fishburn rules verify the Strict Condorcet Principle. Under the absence of cycles, the Nanson, Kemeny, and Schwartz rules also verify the Strict Condorcet Principle.
\\

For position-dominance, we obtain an analogous result, but under the proviso that the order of alternatives may not be consistent among subsets of $X$. 

\begin{teo}\label{poscom}
Let $\Pi$ be a Pos-compatible and consistent permutation process, $f$ a scoring rule with an associated decreasing weight vector, and $A$ a finite set of alternatives. Then  $f(\Pi(A))=\{a\in A:a\rhd_{\Pi(A)}^{pos}b \ \mbox{for all} \ b\in A\}$.
\end{teo}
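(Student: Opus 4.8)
The plan is to prove the two inclusions separately, leaning on Proposition~\ref{strongeffiposd} (scoring rules with decreasing weight vectors are strongly PosD-efficient) together with the fact that Pos-compatibility makes $\rhd_{\Pi(A)}^{pos}$ a total preorder on the finite set $A$. First I would fix notation by writing $T=\{a\in A:a\rhd_{\Pi(A)}^{pos}b\text{ for all }b\in A\}$ for the set on the right-hand side, and observe that, since $\rhd_{\Pi(A)}^{pos}$ is a transitive and complete relation on a nonempty finite set, $T$ is precisely its top equivalence class; in particular $T\neq\emptyset$ and any two members of $T$ Pos-dominate each other.

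For the inclusion $f(\Pi(A))\subseteq T$, I would argue by contradiction. Suppose $b\in f(\Pi(A))$ but $b\notin T$. Then by completeness there is some $c\in A$ with $c\rhd_{\Pi(A)}^{pos}b$ but $b\ntriangleright_{\Pi(A)}^{pos}c$, and picking any top element $a\in T$ we get $a\rhd_{\Pi(A)}^{pos}b$ by transitivity; moreover $b\ntriangleright_{\Pi(A)}^{pos}a$, since otherwise transitivity through $a\rhd_{\Pi(A)}^{pos}c$ would give $b\rhd_{\Pi(A)}^{pos}c$, a contradiction. So $a$ strictly Pos-dominates $b$, and the first clause of strong PosD-efficiency forces $b\notin f(\Pi(A))$ — contradiction. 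Hence every selected alternative lies in $T$.

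For $T\subseteq f(\Pi(A))$, I would first note that $f(\Pi(A))\neq\emptyset$ by definition of an SCC, so by the inclusion just proved there is some $a\in T\cap f(\Pi(A))$; alternatively this is exactly the second assertion of Theorem~\ref{altgan}. Now take any other $a'\in T$. Since $a,a'$ belong to the same top equivalence class we have $a\rhd_{\Pi(A)}^{pos}a'$ and $a'\rhd_{\Pi(A)}^{pos}a$, so the second clause of strong PosD-efficiency gives $a'\in f(\Pi(A))$ iff $a\in f(\Pi(A))$, whence $a'\in f(\Pi(A))$. Combining the two inclusions yields $f(\Pi(A))=T$.

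The main obstacle I anticipate is not conceptual but the careful bookkeeping around ties: because Pos-dominance admits indifference, $T$ need not be a singleton, and one must be sure to invoke strong PosD-efficiency in both of its forms — strict domination to expel dominated alternatives in the first inclusion, mutual domination to propagate membership across the top class in the second. It is also worth flagging, in line with the Example preceding the theorem, that $T$ genuinely depends on $A$ and need not coincide with $\{a\in A:a\rhd_{\Pi}^{pos}b\text{ for all }b\in A\}$ unless $\Pi$ is strongly Pos-compatible, which is exactly why the statement is phrased using $\rhd_{\Pi(A)}^{pos}$ rather than $\rhd_{\Pi}^{pos}$.
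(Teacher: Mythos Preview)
Your proof is correct and follows essentially the same approach as the paper: both arguments hinge on Proposition~\ref{strongeffiposd} for strong PosD-efficiency of the rule, on Pos-compatibility to get a total preorder on $A$, and on Theorem~\ref{altgan} to anchor a top element inside $f(\Pi(A))$. The paper's version is more compressed---it fixes one top element $a$ and then classifies every other $b$ according to whether $b\rhd_{\Pi(A)}^{pos}a$ or not---whereas you organize the same reasoning explicitly as two inclusions, but the logical content is identical.
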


As a consequence of Theorem \ref{poscom}, we get a positive result with strong compatibility, since the orders are consistent across different subsets containing the same alternatives.

\begin{coro}\label{coropos}
Let $\Pi$ be a strongly Pos-compatible and consistent permutation process and $f$ a scoring rule with an associated decreasing weight vector. Then for any finite set of alternatives $A\subseteq X$, $f(\Pi(A))=\{a\in A:a\rhd_{\Pi}^{pos}b \ \mbox{for all} \ b\in A\}$.
\end{coro}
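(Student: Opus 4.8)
The plan is to deduce the corollary directly from Theorem~\ref{poscom}, the only extra work being to identify the relation $\rhd_{\Pi(A)}^{pos}$ that appears there with the global relation $\rhd_{\Pi}^{pos}$ restricted to $A$. First I would observe that a strongly Pos-compatible process is in particular Pos-compatible, so that Theorem~\ref{poscom} applies and yields
\[
f(\Pi(A))=\{a\in A: a\rhd_{\Pi(A)}^{pos}b \text{ for all } b\in A\}.
\]

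Next I would prove the set identity $\{a\in A: a\rhd_{\Pi(A)}^{pos}b\ \text{for all}\ b\in A\}=\{a\in A: a\rhd_{\Pi}^{pos}b\ \text{for all}\ b\in A\}$. The inclusion ``$\supseteq$'' is immediate from the definition of Pos-dominance in a permutation process: $a\rhd_{\Pi}^{pos}b$ entails $a\rhd_{\Pi(A)}^{pos}b$ for every finite $A$ containing $a$ and $b$, so any alternative that dominates all of $A$ in $\Pi$ also dominates all of $A$ in the profile $\Pi(A)$. For ``$\subseteq$'', I would take $a\in A$ with $a\rhd_{\Pi(A)}^{pos}b$ for every $b\in A$ and show $a\rhd_{\Pi(C)}^{pos}b$ for every finite $C\subseteq X$ with $a,b\in C$, which by definition gives $a\rhd_{\Pi}^{pos}b$. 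This is precisely the content of strong Pos-compatibility: by hypothesis the preorder $\rhd_{\Pi}^{pos}$ is consistent across all subsets of $X$, so the comparison between $a$ and $b$ recorded in $\Pi(A)$ agrees with the one recorded in any other $\Pi(C)$ with $a,b\in C$. Substituting the resulting identity of relations into the display above then gives the corollary.

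The single load-bearing step is the inclusion ``$\subseteq$'', and it is worth emphasizing why it does not come for free. For pairwise-majority dominance, Lemma~\ref{consist} shows subset-consistency holds automatically for any consistent PM-compatible process; for position dominance this fails, as the example following Lemma~\ref{consist} exhibits a two-voter profile whose induced Pos-order on $\{a,b\}$ disagrees with the one induced on $\{a,b,c\}$. Hence the argument genuinely needs strong Pos-compatibility as an added assumption, and the corollary cannot be obtained under mere Pos-compatibility. Beyond this, no estimates are required: once the identification of the two dominance relations on $A$ is in place, the conclusion is just a rewriting of Theorem~\ref{poscom}.
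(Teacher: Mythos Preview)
Your proposal is correct and follows essentially the same approach as the paper: the paper presents this corollary with no separate proof, noting only that it follows from Theorem~\ref{poscom} because under strong compatibility ``the orders are consistent across different subsets containing the same alternatives.'' Your argument is simply a careful unpacking of that sentence, first invoking Theorem~\ref{poscom} and then using strong Pos-compatibility to identify $\rhd_{\Pi(A)}^{pos}$ with $\rhd_{\Pi}^{pos}|_A$; the added commentary on why mere Pos-compatibility would not suffice is accurate and helpful but not part of the paper's treatment.
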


Theorems \ref{strongeffi2}, \ref{poscom} and Corollary \ref{coropos} are particular cases of a stronger result:

\begin{teo}\label{general}
Let $\Pi$ be a strongly compatible and consistent permutation process and $f$ a strongly efficient SCC. Then for any finite set of alternatives $A\subseteq X$, $f(\Pi(A))=\{a\in A:a\rhd_{\Pi}b \ \mbox{for all} \ b\in A\}$.\footnote{Whenever the type of dominance relation is not explicitly specified, the result must be valid for both PM and Pos dominance.}
\end{teo}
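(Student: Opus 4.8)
The plan is to derive this from the machinery already in place, essentially by unifying the arguments of Theorems \ref{strongeffi}, \ref{poscom} and Corollary \ref{coropos}. The key observation is that "strongly efficient" here means strongly PMD-efficient or strongly PosD-efficient, and "strongly compatible" means the relevant dominance relation $\rhd_\Pi$ is a total preorder on every finite $A\subseteq X$ \emph{and} the induced orders agree across subsets. So the first step is to fix a finite $A\subseteq X$, invoke strong compatibility to conclude that $\rhd_\Pi|_A$ is a total preorder, and let $a\in A$ be a $\rhd_\Pi$-maximal element, i.e. $a\rhd_\Pi b$ for all $b\in A$; by the consistency built into strong compatibility, this is the same as $a\rhd_{\Pi(A)}b$ for all $b\in A$ (for the PM case this is exactly Lemma \ref{consist}; for the Pos case it is the definitional content of strong Pos-compatibility). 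Thus the set $D:=\{a\in A: a\rhd_\Pi b \text{ for all } b\in A\}$ is nonempty.

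The second step is the inclusion $D\subseteq f(\Pi(A))$. By Theorem \ref{altgan}, at least one maximal alternative $a$ lies in $f(\Pi(A))$. For any other $a'\in D$, since $a$ and $a'$ are both $\rhd_{\Pi(A)}$-maximal in a total preorder we have $a'\rhd_{\Pi(A)}a$ and $a\rhd_{\Pi(A)}a'$; then strong efficiency (the "if $b\rhd_\pi a$ then $b\in f(\pi)\iff a\in f(\pi)$" clause) forces $a'\in f(\Pi(A))$. Hence every element of $D$ is selected.

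The third step is the reverse inclusion $f(\Pi(A))\subseteq D$. Take any $c\in f(\Pi(A))$ and any maximal $a\in D$. Then $a\rhd_{\Pi(A)}c$. If $c\ntriangleright_{\Pi(A)}a$, strong efficiency (the first clause) gives $c\notin f(\Pi(A))$, a contradiction; hence $c\rhd_{\Pi(A)}a$, so $c$ is itself $\rhd_{\Pi(A)}$-maximal, and by transitivity of the total preorder $c\rhd_{\Pi(A)}b$ for all $b\in A$, i.e. $c\in D$. Combining, $f(\Pi(A))=D=\{a\in A:a\rhd_\Pi b\text{ for all }b\in A\}$, and since $A$ was arbitrary we are done.

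I do not anticipate a genuine obstacle here: the content was essentially established in the earlier theorems, and the only care needed is bookkeeping — making sure that in the PM case one cites Lemma \ref{consist} to pass between $\rhd_\Pi|_A$ and $\rhd_{\Pi(A)}$, while in the Pos case one uses the strong-compatibility hypothesis directly (plain Pos-compatibility is \emph{not} enough, as the example after Lemma \ref{consist} shows), and stating everything uniformly so that the single argument covers both dominance notions as the footnote requires. The mild subtlety worth a sentence in the writeup is why the maximal set $D$ is exactly the "weak Condorcet / top-tier" set and why strong efficiency pins down membership in $f(\Pi(A))$ on both sides of the equality.
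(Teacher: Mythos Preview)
Your proposal is correct and follows essentially the same two-inclusion strategy as the paper: use Theorem~\ref{altgan} together with (strong) efficiency to show that every $\rhd_\Pi$-maximal alternative is selected, and use the first clause of strong efficiency plus the totality of the preorder to show that nothing else is. Your write-up is in fact more explicit than the paper's on the first inclusion (the paper simply cites Theorem~\ref{altgan} and ``strong efficiency implies efficiency'' without spelling out why \emph{every} maximal element, not just one, must lie in $f(\Pi(A))$), and your remarks on handling the Pos case via strong compatibility rather than Lemma~\ref{consist} are apt.
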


Then, if the permutation process is compatible, an efficient social choice correspondence selects the alternatives that beat or tie with every other alternative. The following example, due to \cite{caragiannis2016noisy} shows that even when the permutation process is compatible, the `best' alternative may not coincide under the two dominance relations.

\begin{exa}\label{ejpmposd}
Consider $\Pi$, the consistent permutation process which given the alternatives $a,b$ and $c$, yields the following profile:  $(a>_1 b>_1 c)$, with weight $\frac{4}{11}$;  $(b>_2 a>_2 c)$, with weight $\frac{2}{11}$; $(b>_3 c>_3 a)$ with weight $ \frac{3}{11} $ and $(c>_4 a>_4 b)$ weighted $\frac{2}{11}$. Then we have that  $a\rhd_{\Pi}^{pm}b\rhd_{\Pi}^{pm}c$ and $b\rhd_{\Pi}^{pos}a\rhd_{\Pi}^{pos}c$.\footnote{It is easy to show that a `swap-dominance' ranking cannot be obtained. For this to happen, a necessary condition would be the coincidence between the `majority' and `position' rankings. The following example shows that it is not sufficient: in the anonymous profile \mbox{$\{a>_1b>_1c,a>_2c>_2b,c>_3b>_3a$\}} the `majority' and `position' rankings coincide, but still no swap-dominance ranking obtains.}
\end{exa}

While the two dominance relations may not yield the same outcome, the question of when a permutation process is compatible is relevant. To answer it recall that a permutation process can be interpreted in terms of utilities, which allows us to introduce the following definition.
\begin{defi}
Alternative $a\in X$ dominates $b\in X$ in the utility process $U$ if for every finite subset of alternatives containing $x_1 = a$ and $x_2 = b$,  $\{x_1, x_2,\ldots,x_{m}\}\subseteq X$, and every vector of utilities $(u_1,u_2,\ldots,u_m)$ with $u_1\geq u_2$ we have that
	$$p(u_1,u_2,\ldots,u_m)\geq p(u_2,u_1,\ldots,u_m) $$
where $p$ is the probability function.
\end{defi}

The following two lemmas state that `natural' utility processes are indeed compatible.

\begin{lema}\label{consistencia}
Let $\Pi$ be a consistent permutation process and $U$ be its corresponding utility process. If alternative $a$ dominates $b$ in $U$, then $a\rhd_{\Pi}b$.
\end{lema}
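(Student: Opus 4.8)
The plan is to unwind both definitions and reduce the claim to a statement about the probability of a single pairwise comparison, then invoke consistency. Recall that $a \rhd_\Pi b$ means that for \emph{every} finite $A \subseteq X$ with $a,b \in A$ we have $a \rhd_{\Pi(A)} b$, so it suffices to fix an arbitrary such $A = \{x_1,\dots,x_m\}$ with $x_1 = a$, $x_2 = b$ and show $a \rhd_{\Pi(A)} b$. I would treat the two dominance notions separately, since the combinatorial content differs, but both will hinge on the same underlying inequality supplied by the hypothesis: for every utility vector $(u_1,\dots,u_m)$ with $u_1 \geq u_2$, $p(u_1,u_2,\dots,u_m) \geq p(u_2,u_1,\dots,u_m)$, i.e.\ swapping the utilities of $a$ and $b$ (with $a$ getting the larger value) never decreases the probability of the resulting ordering.

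For PM-dominance, I would write $|\{\sigma \in S_A : a >_\sigma b\}|$ — here read as the total probability mass under $\Pi(A)$ of rankings placing $a$ above $b$ — as an integral/sum over the utility process: the event $a >_\sigma b$ corresponds to $U_a > U_b$. The key step is to define an involution on the outcome space that swaps the roles of $a$ and $b$: given a configuration of utilities, exchange $U_a$ and $U_b$. This bijects the event $\{U_a > U_b\}$ with $\{U_b > U_a\}$, and the dominance hypothesis says that under this map the probability density only goes up when we move from the ``$b$ above $a$'' side to the ``$a$ above $a$'' side. Integrating this pointwise inequality over the matched configurations yields $P_{\Pi(A)}(a >_\sigma b) \geq P_{\Pi(A)}(b >_\sigma a)$, which is exactly $a \rhd^{pm}_{\Pi(A)} b$. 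Because $A$ was arbitrary, $a \rhd^{pm}_\Pi b$ follows.

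For Pos-dominance I would argue that $s_j(a) \geq s_j(b)$ for each $j \in \{1,\dots,m\}$, where $s_j(a)$ is the mass of rankings putting $a$ in position $\leq j$. The event ``$a$ is in one of the top $j$ positions'' is ``at most $j-1$ of the other alternatives have higher utility than $a$''. Again use the swap-$U_a$-with-$U_b$ involution: it maps the event $\{\#\{k : U_k > U_b\} \leq j-1\}$ onto $\{\#\{k : U_k > U_a\} \leq j-1\}$ (the count over $k \neq a,b$ is unchanged by the swap, and the comparison of the swapped coordinate to the others is what moves), and the hypothesis $u_1 \geq u_2 \Rightarrow p(\dots) \geq p(\text{swap})$ again gives the pointwise density inequality on matched configurations. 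Integrating gives $s_j(a) \geq s_j(b)$ for every $j$, hence $a \rhd^{pos}_{\Pi(A)} b$, and arbitrariness of $A$ gives $a \rhd^{pos}_\Pi b$.

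The main obstacle is making the ``swap involution + pointwise density inequality'' argument rigorous in the level of generality the paper works at: the utility process is an abstract stochastic process indexed by $X$, so one must be careful about what $p(u_1,\dots,u_m)$ means (a joint density on $\mathbb{R}^m$ for the finite marginal indexed by $A$, which exists for the exponential-family models under consideration) and about ties, i.e.\ the measure-zero or positive-mass set $\{U_a = U_b\}$ — here the paper's convention that dominance relations allow ties is what makes the non-strict inequalities line up. Consistency of $\Pi$ enters only to guarantee that the finite marginal $\Pi(A)$ is genuinely the pushforward of the utility configuration on $A$, so that the integral representation is legitimate; once that is in hand, the argument is the same for every $A$ and no compatibility between different $A$'s is needed (which is why the conclusion is stated for $\rhd_\Pi$ directly). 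I would also remark that the decreasing-weights / symmetry hypotheses elsewhere are not needed here — only the one-line exchange inequality in the definition of domination in $U$.
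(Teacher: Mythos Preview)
Your argument is correct in outline and, unlike the paper, is self-contained. The paper's own proof is a one-line citation: it invokes Lemma~4.9 of Noothigattu et al.\ to the effect that dominance in $U$ implies \emph{swap-dominance} (for every finite $A$ and every $\sigma\in S_A$ with $a>_\sigma b$, the $\Pi(A)$-probability of $\sigma$ is at least that of the ranking obtained by transposing $a$ and $b$), and then observes that swap-dominance is stronger than both PM- and Pos-dominance. Your ``swap involution $+$ pointwise density inequality'' is precisely a direct proof of swap-dominance at the level of utility configurations, followed by summing over the appropriate families of rankings; so the two routes are really the same idea viewed at different levels of abstraction. What you gain is that your write-up does not depend on an external reference and makes explicit where consistency of $\Pi$ is used; what the paper's route buys is brevity and the convenient intermediate notion (swap-dominance) that handles both PM and Pos simultaneously without having to rerun the integration argument for each.

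One point in your Pos-dominance step deserves a sentence of care. On the event $E_a=\{\text{$a$ in the top $j$ positions}\}$ you do \emph{not} have $u_1\ge u_2$ everywhere, so the pointwise inequality $p(u)\ge p(\tau u)$ is not available on all of $E_a$. The clean fix is to note that $\tau$ maps $E_a\cap E_b$ to itself, so that part contributes zero to $P(E_a)-P(E_b)$; on the remainder $E_a\setminus E_b$ one has $a$ in the top $j$ but $b$ outside the top $j$, which forces $u_1>u_2$, and there the hypothesis does apply. Equivalently, first establish the ranking-level inequality $\Pi(A)(\sigma)\ge \Pi(A)(\sigma^{ab})$ for every $\sigma$ with $a>_\sigma b$ (this is swap-dominance) and then sum; that version avoids the splitting altogether and is exactly what the cited lemma packages.
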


We consider consistent permutation processes $U$ such that given a set of alternatives $\{x_1,\ldots,x_m\}$, the utilities $(U_{x_1},\ldots,U_{x_m})$ are independent and have a distribution drawn from the Exponential Family \citep{morris1982natural}:
$$p_{U_{a}}(u_{1})=\exp({\eta(\mu_{a})T(u_{1})-A(\mu_{a})+B(u_{1})})$$
Examples of this are the Gaussian or Normal (Thurstone-Mosteller Process \citep{thurstone1927law,mosteller2006remarks}), the Gumbel (Placket-Luce Process \citep{plackett1975analysis,luce2012individual}), Poisson, Gamma, Binomial and Negative Binomial distributions. Then:

\begin{lema}\label{efresults}
Under a utility process  with a distribution belonging to the Exponential Family, the alternative \textit{a} dominates alternative \textit{b} if $\eta(\mu_{a})>\eta(\mu_{b})$ and $T(u_{1})>T(u_{2})$.\footnote{For the  Normal $(N(\mu_{x_i},\frac{1}{2}))$ and the Gumbel $(G(\mu_{x_i},\gamma))$ processes this means that $\mu_a\geq \mu_b$. For a Poisson process $(P(\lambda_{x_i}))$, it means that $\lambda_a\geq \lambda_b$. For a Gamma process with fixed shape $(\Gamma(r,\lambda_{x_i}))$, the implication is that $\lambda_b\geq \lambda_a$. For a Binomial process $(B(n_{x_i},p_{x_i}))$ with fixed $n$, it means that $p_a\geq p_b$. If $p$ is fixed and $n$ is variable, it corresponds to $n_a\geq n_b$. For a Negative Binomial process $(NB(r_{x_i},p_{x_i}))$ with fixed $r$, it means that $p_b\geq p_a$ while if $p$ is fixed and $r$ is variable, it is $r_a\geq r_b$.}
\end{lema}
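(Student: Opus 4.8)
The plan is to collapse the claim to a one-variable comparison between the marginal densities of $U_a$ and $U_b$ and then read that comparison off the canonical form of the exponential family. First I would use independence: fix any finite subset $\{x_1=a,\ x_2=b,\ x_3,\ldots,x_m\}\subseteq X$ and any utility vector $(u_1,u_2,\ldots,u_m)$ with $u_1\geq u_2$. Because $U_{x_1},\ldots,U_{x_m}$ are independent, the joint probability function factorizes as
$$p(u_1,u_2,u_3,\ldots,u_m)=p_{U_a}(u_1)\,p_{U_b}(u_2)\prod_{i=3}^{m}p_{U_{x_i}}(u_i),$$
and similarly $p(u_2,u_1,u_3,\ldots,u_m)=p_{U_a}(u_2)\,p_{U_b}(u_1)\prod_{i=3}^{m}p_{U_{x_i}}(u_i)$. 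The tail product is common to both, so, discarding it, the inequality $p(u_1,u_2,\ldots,u_m)\geq p(u_2,u_1,\ldots,u_m)$ required by the definition of domination in $U$ becomes equivalent to $p_{U_a}(u_1)\,p_{U_b}(u_2)\geq p_{U_a}(u_2)\,p_{U_b}(u_1)$.

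Next I would substitute $p_{U_x}(u)=\exp(\eta(\mu_x)T(u)-A(\mu_x)+B(u))$ into both sides and take logarithms. The terms $-A(\mu_a)-A(\mu_b)$ and $B(u_1)+B(u_2)$ occur symmetrically and cancel, leaving
$$\log\frac{p_{U_a}(u_1)\,p_{U_b}(u_2)}{p_{U_a}(u_2)\,p_{U_b}(u_1)}=\big(\eta(\mu_a)-\eta(\mu_b)\big)\big(T(u_1)-T(u_2)\big).$$
Under the hypotheses $\eta(\mu_a)>\eta(\mu_b)$ and $T$ increasing (so $u_1\geq u_2$ forces $T(u_1)\geq T(u_2)$) the right-hand side is nonnegative, which is exactly the inequality needed; since the subset and the utilities were arbitrary, $a$ dominates $b$ in $U$ (and hence, by Lemma~\ref{consistencia}, $a\rhd_{\Pi}b$).

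The footnote statements follow by instantiating this computation for each listed family: write down the standard canonical parametrization of the Normal, Gumbel, Poisson, Gamma, Binomial and Negative Binomial laws, identify $\eta(\mu_x)$ and $T$, and rephrase ``$\eta(\mu_a)>\eta(\mu_b)$ with $T$ increasing'' as a constraint on the relevant parameter ($\mu$, $\lambda$, $p$, $n$ or $r$). This is routine; the one place that needs care — and the closest thing to an obstacle — is the bookkeeping of signs, since for the Gamma with fixed shape and the Negative Binomial with fixed $r$ the natural parameter decreases in the rate/probability parameter, reversing the corresponding inequality. Two harmless caveats complete the argument: the tail product may be cancelled on the full-probability set where its factors are positive, and in the discrete cases $u_1$ and $u_2$ should be taken in the common support of the two distributions.
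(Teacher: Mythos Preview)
Your proposal is correct and follows essentially the same route as the paper: use independence to reduce the joint inequality to the two-factor comparison $p_{U_a}(u_1)p_{U_b}(u_2)\geq p_{U_a}(u_2)p_{U_b}(u_1)$, substitute the exponential-family form, and factor the log-ratio as $(\eta(\mu_a)-\eta(\mu_b))(T(u_1)-T(u_2))$. If anything, you are slightly more explicit than the paper in spelling out the cancellation of the tail product and in flagging the sign reversals for the Gamma and Negative Binomial cases.
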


Lemma \ref{efresults} implies that a consistent permutation process with a utility process drawn from the Exponential Family (which includes some of the best-known distributions), is compatible under both notions of dominance. This ensues from the fact that  dominance only depends on the parameters of the distributions, all of which have real values (and thus are elements of a complete preorder). Moreover, Lemma \ref{consistencia} states that the preorders are the same under both dominance notions satisfying strong compatibility. This allows us to get rid of a fixed $A$ even for Pos-dominance. The downside of this is that we may rule out permutation processes that could be compatible under one dominance notion but not under the other.\\

According to Theorem \ref{strongeffi2}, Corollary \ref{coropos}, and Lemma \ref{efresults}, in many cases of interest we only need to find the most dominant alternatives, which, depending on the permutation process, are the alternatives with a maximum or minimum defined parameter.

\section{Stability}\label{stabsec}
In this section, we introduce another property that we expect a SCC should verify, namely the consistency of the voting rule across decisions.

\begin{defi}
Let $f$ be an anonymous SCC and $\Pi$ a permutation process over $X$. We say that $f$ is stable if for any non-empty and finite subset of alternatives $A$ and $B$ such that $B\subseteq A\subseteq X$, $f(\Pi(A))\cap B=f(\Pi(B))$ whenever $f(\Pi(A))\cap B\neq \emptyset$.
\end{defi}

Stability is a highly relevant property since an intelligent agent should be able to make decisions, even in the context of missing or noisy information \citep{greene2016embedding}. In our setting, this means that voting rules must yield consistent choices on restricted sets of alternatives. That is, even if `small' differences exist among the sets of alternatives the decisions made will not exhibit `big' differences. For example, the decision of an autonomous vehicle carrying three passengers on whether to go straight or swerve when no pedestrians are in front of it should be the same as if it were transporting only two people. \\

The next result shows that the stability of a SCC $f$ depends on how $f$ and the permutation process are defined.


\begin{teo}\label{stability}
Let $f$ be a strongly efficient SCC and let $\Pi$ be a strongly compatible permutation process. Then $f$ is stable.
\end{teo}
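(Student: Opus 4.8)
The plan is to reduce the stability of $f$ to a purely order-theoretic statement about the total preorder $\rhd_{\Pi}$, and then settle that statement directly. The engine is Theorem~\ref{general}: since $f$ is strongly efficient and $\Pi$ is strongly compatible, for every finite $A\subseteq X$ we have $f(\Pi(A))=D(A):=\{a\in A:\ a\rhd_{\Pi}b\ \text{for all}\ b\in A\}$, i.e. $f(\Pi(A))$ is exactly the set of $\rhd_{\Pi}$-maximal elements of $A$. Because $\Pi$ is strongly compatible, $\rhd_{\Pi}$ restricts to one and the same total preorder on every finite subset, so ``being $\rhd_{\Pi}$-maximal in $A$'' is unambiguous and, $A$ being finite, $D(A)\neq\emptyset$. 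In particular this reading is uniform in the two dominance notions, since the argument will use only that $\rhd_{\Pi}$ is a total preorder that is stable under restriction.

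With this in hand, I would fix nonempty finite $B\subseteq A\subseteq X$, assume $f(\Pi(A))\cap B\neq\emptyset$, and prove $D(A)\cap B=D(B)$. The inclusion $D(A)\cap B\subseteq D(B)$ comes first and needs nothing extra: if $a\in D(A)\cap B$ then $a\rhd_{\Pi}b$ for every $b\in A$, hence a fortiori for every $b\in B$, so $a\in D(B)$. The reverse inclusion $D(B)\subseteq D(A)\cap B$ is where the hypothesis $f(\Pi(A))\cap B\neq\emptyset$ enters: pick $c\in D(A)\cap B$. For any $a\in D(B)$ we have $a\in B\subseteq A$, and $a\rhd_{\Pi}c$ because $c\in B$ and $a$ is $\rhd_{\Pi}$-maximal in $B$; since $c\rhd_{\Pi}b$ for all $b\in A$, transitivity of $\rhd_{\Pi}$ gives $a\rhd_{\Pi}b$ for all $b\in A$, i.e. $a\in D(A)\cap B$. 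Combining the two inclusions, $f(\Pi(A))\cap B=D(A)\cap B=D(B)=f(\Pi(B))$, which is exactly stability (and $f(\Pi(B))=D(B)\neq\emptyset$, consistently with SCCs being nonempty-valued).

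The one point that needs care — and the likeliest source of friction — is the appeal to Theorem~\ref{general}, which is stated for permutation processes that are strongly compatible \emph{and} consistent, whereas the present statement mentions only strong compatibility. I would either carry ``consistent'' along as a standing hypothesis (the natural reading, since the whole of Section~\ref{aggrepermsec} develops the compatibility machinery for consistent processes), or, to avoid it, verify that the characterization $f(\Pi(A))=D(A)$ already follows from strong efficiency together with strong compatibility alone: the inclusion $f(\Pi(A))\supseteq D(A)$ uses the existence of a dominant alternative in $f(\Pi(A))$ (Theorem~\ref{altgan}) and then strong efficiency applied to pairs of mutually dominating $\rhd_{\Pi}$-maximal elements, while $f(\Pi(A))\subseteq D(A)$ is immediate from strong efficiency. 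Once that characterization is secured, the remainder is the short order-theoretic computation above, and nothing else is delicate.
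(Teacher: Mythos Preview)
Your proposal is correct and follows essentially the same route as the paper: both invoke Theorem~\ref{general} to identify $f(\Pi(A))$ with the $\rhd_{\Pi}$-maximal elements of $A$ and then verify the two inclusions, the paper handling the second by contradiction where you argue directly via transitivity through a witness $c\in D(A)\cap B$. Your caveat about the missing ``consistent'' hypothesis in the statement is well-taken; the paper's own proof also appeals to Theorem~\ref{general} without addressing this point.
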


This is a very encouraging result, since according to Proposition \ref{strongeffiposd} and Theorem \ref{strongeffi2} there exist many well-known voting rules that are strongly efficient. Thus, these rules must also verify stability.

\section{Conclusions}\label{concsec}
In this paper, we extended the study of the method introduced by \cite{noothigattu2018voting} for automating ethical AI decision-making under different notions of dominance. According to one of those notions, an alternative is better than another if a majority prefers it. According to the other notion, an alternative is dominant if it is better positioned than the other alternatives in most rankings. At first sight, the dominance relations here introduced, seem very intuitive and natural. The fact that the same alternative is chosen under ours and \cite{noothigattu2018voting} notions of dominance justifies the choice of that alternative. We see, thus, this work as a contribution to the solution of the AI alignment problem in ethical decision-making.\\

We showed that, depending on how we learn the preferences of the voters, there are many well-known voting rules that behave well, in the sense that they select the most dominating alternatives. Moreover, it is only necessary to find the alternative with an optimum defined parameter (depending on the distribution used to learn the model). When we use `natural' distributions for the permutation process we do not need to distinguish between dominance notions. Another important property is stability, which amounts to the consistency of choices across different sets of alternatives. For example, an autonomous vehicle is likely to face potentially infinite scenarios, and we want decisions to be similar in similar scenarios. We found again that there are many voting rules verifying this property.\\


We consider this paper also as a contribution to the theory of aggregation of permutations. Its full development in this light may help to overcome classical problems of aggregation theory.\\

A possible future line of research consists in analyzing the strategic aspects of the aggregation of permutation processes. It is important to guarantee the non-manipulability of this decision-making process. For example, there should not exist advantages from lying about preferences (this property is known in the literature as strategy-proofness \citep{gibbard1973manipulation,satterthwaite1975strategy}) or from voting several times (known in the literature as false name-proofness \citep{conitzer2008anonymity,fioravanti2022false}). The results on manipulation avoidance are rather negative since there exist very few rules satisfying non-manipulability when considering unrestricted domains. We believe that in the case of the aggregation of permutation processes, new rules can be found to overcome these problems.
\bibliography{ref}

\appendix
\section{Voting rules}

\textbf{Black's rule:\footnote{The Black and Dodgson rules described are called Revised Black and Simplified Dodgson rules in \cite{felsenthal2014weak}.}} this rule is applied in two stages. In the first stage, the candidates that beat or tie with every other candidate are selected. If there are no such winners, each candidate receives a score depending on the positions on every ranking (the better the position the higher the score). The alternatives with the largest overall scores are elected.\\

\textbf{Dodgson's rule:} A Condorcet winner is a candidate that beats every other alternative. The rule computes the minimum number of times that it is necessary to swap two adjacent alternatives on some rankings to make each candidate a Condorcet winner. The alternatives that require the minimum steps are considered the winners.\\

\textbf{Young's rule:} This rule deletes candidates to make an alternative to the Condorcet winner. The candidates that require the least deletions are declared the winners.\\

\textbf{Kemeny's rule:} This rule selects the most preferred alternatives of the rankings that minimize the number of pairs of candidates that are ranked opposite by all the voters.\\

\textbf{Nanson's rule:} As in Black's rule, a score is given to the candidates according to their position. Every candidate with a score below the average score is deleted. The process is repeated with the candidates left, and so on. The rule selects the undeleted candidates.\\

\textbf{Minimax rule:} The candidates whose maximum losses in the paired comparisons are the least are declared winners.\\

\textbf{Fishburn's rule:} this rule checks out whether everything that beats \textit{x} also beats \textit{y} under simple majority, and if
\textit{x} beats or ties something that beats \textit{y}, then \textit{x} is ‘better than’ \textit{y} under simple majority. The rule selects the best alternatives under this notion.\\

\textbf{Bucklin's rule:} the Bucklin score of an alternative $a$, $B(a)$, is the minimum $k$ such that $a$ is among the first $k$ positions in the majority of input votes. The Bucklin rule selects the alternatives with the least Bucklin scores.\\

\textbf{Scoring rules:} these rules have an associated weights vector $ (\alpha_1, \alpha_2, \ldots, \alpha_{|A|})$, where $\alpha_i\geq \alpha_{i+1}$ for $i=1, \ldots, |A|-1$. An alternative $a$ in a profile $\sigma_j$ gets $\alpha_i$ points if $\sigma_j(a)=i$. The score of an alternative is the sum of all the points across all voters. The candidates with the most overall points are declared the winners. The Plurality and the Antiplurality rules are examples of scoring rules with associated weights vector $(1, 0, \ldots, 0)$ and $(1, \ldots, 1, 0)$ respectively. We say that a scoring rule has an associated decreasing weights vector if $\alpha_i> \alpha_{i+1}$ for $i=1, \ldots, |A|-1$.
\section{Proofs}
\begin{proof}Proposition \ref{pocasopciones}\\

Consider the profile $\sigma=(a>_1 b>_1 c>_1 d>_1 e, e>_2 d>_2 a>_2 c>_2 b, b>_3 c>_3 d>_3 e>_3 a)$. We have that $a\rhd^{pm}_{\pi}b$.\\

\textbf{Black's rule:} there is no alternative that beats or ties every other alternative. Then we use a scoring rule with an associated decreasing weighted vector, for example, $(5,4,3,2,1)$. We obtain that $score(a)=9$ and $score(b)=10$, with $b$ getting the largest score. Thus we have that $b\in Black(\pi)$ and $a\notin Black(\pi)$.\\

\textbf{Dodgson's rule:} alternative $b$ needs $1$ swap in the first vote to become the Condorcet winner, while all the other alternatives require $2$ or more. Then $b\in Dodgson(\pi)$ and $a\notin Dodgson(\pi)$.\\

\textbf{Young's rule:} alternative $b$ needs alternative $a$ to be eliminated in order to become the Condorcet winner, while all the other alternatives need $2$ or more alternatives to be eliminated. Then $b\in Young(\pi)$ and $a\notin Young(\pi)$.\\

\textbf{Kemeny's rule:} the ranking that minimizes the number of candidates that are ranked opposite by all the voters is $b>c>d>a$. Then $b\in Kemeny(\pi)$ and $a\notin Kemeny(\pi)$.\\

Consider the profile $\sigma=(e>_1 a>_1 b>_1 c>_1 d, b>_2 c>_2 d>_2 e>_2 a)$. We have that $a\rhd^{pm}_{\pi}b$.\\

\textbf{Nanson's rule:} we use a scoring rule with an associated decreasing weighted vector, for example, $(5,4,3,2,1)$. During the first round, $a$ and $d$ are eliminated. In the second round, $c$ is eliminated. Finally, we have that $b$ and $e$ are selected. Then $b\in Nanson(\pi)$ and $a\notin Nanson(\pi)$.\\

Consider the profile $\sigma=(d>_1 c>_1 a>_1 b, b>_2 c>_2 d>_2 a, d>_3 c>_3 a>_3 b, a>_4 b>_4 c>_4 d, b>_5 c>_5 d>_5a)$. We have that $a\rhd^{pm}_{\pi}b$.\\

\textbf{Minimax rule:} alternative $b$, $c$ and $d$ have $2$ as a maximum loss while $a$ has $3$. Then $b\in Minimax(\pi)$ and $a\notin Minimax(\pi)$.\\

Consider the profile $\sigma=(a>_1 b>_1 c>_1 d, b>_2 c>_2 d>_2 a)$. We have that $a\rhd^{pm}_{\pi}b$.\\

\textbf{Fishburn's rule:} in this profile $b$ beats $c$ and $d$, ties with $a$ and beat all the alternative that beat or tie with $a$. Then $b\in Fishburn(\pi)$ and $a\notin Fishburn(\pi)$.
\end{proof}
\begin{proof}Proposition \ref{buck}\\

Let $a\rhd^{pd}_{\pi}b $ and $b\in Bucklin(\pi)$. Let $B(b)=t$, that is, $s_{t}(b)=\alpha\geq[\frac{n+1}{2}]$ Because of position dominance, we have that $s_{t}(a)\geq s_{t}(b)=\alpha$. Then, $B(a)\leq t$. So $a\in Bucklin(\pi)$.\\

The following example shows that it is not strongly PMD-efficient. Let $\pi$ be the preference profile such that $\sigma=(a>_1 b>_1 c,a>_2 b>_2 c,b>_3 a>_3 c,c>_4 a>_4 b)$. Then $B(a)=B(b)=2$, $B(c)=3$, $s_{1}(a)=2$, $s_{1}(b)=$, $s_{2}(a)=4$ and $s_{2}(b)=3$. So, we have that $a\rhd^{pd}_{\pi}b$ and $b\ntriangleright^{pd}_{\pi}a$ but $Bucklin(\pi)=\{a,b\}$.
\end{proof}
\begin{proof}Proposition \ref{effiposd}\\

	For ease of demonstration we assume, without loss of generality, that $A=\{a,b,c\}$. The score of alternative $a$ is given by $Score(a)=(s_{1}(a),s_{2}(a)-s_{1}(b),3-s_{2}(a))\cdot(\alpha_{1},\alpha_{2},\alpha_{3})=(\alpha_{1}-\alpha_{2})s_{1}(a)+(\alpha_{2}-\alpha_{3})s_{2}(a)+3\alpha_{3}$ with $\alpha_{i}-\alpha_{i+1}\geq0$ for $i=1,2$. From the fact that $a\rhd^{pd}_{\pi}b $ we have that $s_{j}(a)\geq s_{j}(b)$ for $j=1,2$. Then $Score(a)\geq Score(b)$. If $b$ is such that $b\in f(\pi)$, then $a\in f(\pi)$ as it has more or equal points than $b$.
\end{proof}

\begin{proof} Proposition \ref{strongeffiposd}\\

	We assume again that $A=\{a,b,c\}$. And consider the same argument as in the proof of Proposition \ref{effiposd}. The main difference is that $\alpha_{i}-\alpha_{i+1}>0$ for $i=1,2$. Then if $a\rhd^{pd}_{\pi}b $ and $b\ntriangleright_{\pi}^{pd}a$, we have that $Score(a)> Score(b)$ and then $b\notin f(\pi)$. If instead $b\rhd^{pd}_{\pi}a$, then $Score(a)= Score(b)$ and both alternatives are selected if they have the most overall points.
\end{proof}
\begin{proof}Lemma \ref{consist}\\
First, by definition, we have that $a\rhd_{\Pi}^{pm}b$ implies $a\rhd_{\Pi(A)}^{pm}b$ for every subset $A\subseteq X$ such that $a,b\in A$. Then $(\rhd_{\Pi}|_{A})\subseteq (\rhd_{\Pi(A)})$.\\
	
	For the other inclusion, we must prove that $a\rhd_{\Pi(A)}^{pm}b$ implies that $a\rhd_{\Pi(C)}^{pm}b$ for any finite subset $C$ that contains $a$ and $b$. Let $a\rhd_{\Pi(A)}^{pm}b$ and suppose that $a\ntriangleright_{\Pi(C)}^{pm}b$ for $C$. Since $\rhd_{\Pi}^{pm}$ is PM-compatible, it is a total preorder for every $C$. Then it must be that $b\rhd_{\Pi(C)}^{pm}a$. Then
	\begin{equation}\label{igualdad}
	|\{\sigma\in \Pi(A):a>_{\sigma}b\}|\geq|\{\sigma\in \Pi(A):b>_{\sigma}a\}|
	\end{equation}
	and
	\begin{equation}\label{desigualdad}
		|\{\sigma\in \Pi(C):a>_{\sigma}b\}|<|\{\sigma\in \Pi(C):b>_{\sigma}a\}|
	\end{equation}
	Since the permutation process is consistent, when we restrict the domain to $\{a,b\}$, the relative positions of $a$ and $b$ do not change. So, (\ref{igualdad}) leads to $|\{\sigma\in \Pi({\{a,b\})}:a>_{\sigma}b\}|\geq|\{\sigma\in \Pi({\{a,b\})}:b>_{\sigma}a\}|$ and (\ref{desigualdad}) leads to \mbox{${|\{\sigma\in \Pi({\{a,b\})}:a>_{\sigma}b\}|<|\{\sigma\in \Pi({\{a,b\})}:b>_{\sigma}a\}|}$}. This is a contradiction, derived from the assumption that $a\ntriangleright_{\Pi(C)}^{pm}b$ for a finite subset $C$ containing $a$ and $b$. Then $a\rhd_{\Pi(C)}^{pm}b$ for any $C$, and thus  $a\rhd_{\Pi}^{pm}b$ verifies that $(\rhd_{\Pi(A)}) \subseteq (\rhd_{\Pi}|_{A})$.
\end{proof}
\begin{proof} Theorem \ref{altgan}\\
We prove the theorem for pairwise majority dominance since the proof is analogous to position dominance.\\

Since $\Pi$ is PM-compatible, the relation $\rhd_{\Pi}^{pm}$ restricted to $A$ is a total preorder. Therefore, there exists an alternative $a\in A$ such that $a\rhd_{\Pi(A)}^{pm}b$ for all $b\in A$.\\

The definition of an SCC states that the outcome is non-empty. Assume that there is a $b\neq a$ such that $b\in f(\Pi(A))$. Since $a\rhd_{\Pi(A)}^{pm}b$ and $f$ is PM-efficient, it must be that $a\in f(\Pi(A))$.  
\end{proof}
\begin{proof} Theorem \ref{strongeffi2}\\
Since $\Pi$ is PM-compatible, $\rhd_{\Pi(A)}$ is a total preorder, thus there are no `cycles'. By Theorem \ref{altgan}, there is at least one alternative $a\in A$ that PM-dominates the other alternatives. Moreover, if there is more than one, then these alternatives tie between them. As the outcome of a rule can not be the empty set, a strongly PM-efficient rule selects the alternatives that are tied between them and beat every other alternative. Then the rule verifies the Strict Condorcet principle. The only if part is straightforward.
\end{proof}
\begin{proof} Theorem \ref{poscom}\\
According to Theorem \ref{altgan} there exists an alternative $a\in A$ such that $a\rhd_{\Pi(A)}^{pos}b$ for all $b\in A$ and $a\in f(\Pi(A))$. Since $\Pi$ is Pos-compatible, any other alternative $b\in A$ is comparable with $a$.  By Proposition \ref{strongeffiposd}, all the scoring rules with associated decreasing weights are strongly PD-efficient. If $b\rhd_{\Pi(A)}^{pos}a$, it must be that $b\in f(\Pi(A))$, and $b\rhd_{\Pi(A)}^{pos}c$ for all $c\in A$ (because of transitivity). If $b\ntriangleright_{\Pi(A)}^{pos}a$, then $b\notin f(\Pi(A))$.
\end{proof}
\begin{proof} Theorem \ref{general}\\
Let $A$ be a finite set of alternatives. By Theorem \ref{altgan} and the fact that strong efficiency implies efficiency, we have that $\{a\in A:a\rhd_{\Pi}b\ \mbox{for all} \ b\in A\}\subseteq f(\Pi(A)) $.\\

Now let $a\in f(\Pi(A))$ and assume that there exists $b\in A$ such that $a\ntriangleright_{\Pi}b$. Since $\Pi$ is strongly compatible, $\Pi(A)$ has the same total preorder, for every $A\subseteq X$. Then $b\rhd_{\Pi(A)}a$ and $a\ntriangleright_{\Pi(A)}b$. Since $f$ is strongly efficient, it must be that $a\notin f(\Pi(A))$, a contradiction. Then $f(\Pi(A))\subseteq\{a\in A:a\rhd_{\Pi}b\ \mbox{for all} \ b\in A\}$.
\end{proof}
\begin{proof} Lemma \ref{consistencia}\\
Noothigattu et al \cite{noothigattu2018voting} prove in Lemma 4.9 that if $a$ dominates $b$ in $U$, then $a$ {\em swap-dominates} $b$. Swap dominance, as already noted, is a stronger notion that implies PM dominance and Pos-dominance. Then both conditions are satisfied.
\end{proof}
\begin{proof} Lemma \ref{efresults}\\
We have to find the conditions according to which $p(u_1,u_2,\ldots,u_m)\geq p(u_2,u_1,\ldots,u_m) $ when $u_1\geq u_2$. Since utilities are sampled independently, this implies checking when $p_{U_a}(u_1)p_{U_b}(u_2)\geq p_{U_a}(u_2)p_{U_b}(u_1)$. That is, 
$$\exp({\eta(\mu_{a})T(u_{1})-A(\mu_{a})+B(u_{1})})\exp({\eta(\mu_{b})T(u_{2})-A(\mu_{b})+B(u_{2}))}\geq$$
$$\exp({\eta(\mu_{a})T(u_{2})-A(\mu_{a})+B(u_{2})})\exp({\eta(\mu_{b})T(u_{1})-A(\mu_{b})+B(u_{1})})$$
\noindent and thus
$$\eta(\mu_{a})T(u_{1})-A(\mu_{a})+B(u_{1}))+\eta(\mu_{b})T(u_{2})-A(\mu_{b})+B(u_{2})\geq$$
$$\eta(\mu_{a})T(u_{2})-A(\mu_{a})+B(u_{2})+\eta(\mu_{b})T(u_{1})-A(\mu_{b})+B(u_{1})$$
Then
$$(\eta(\mu_{a})-\eta(\mu_{b}))T(u_{1})+(\eta(\mu_{b})-\eta(\mu_{a}))T(u_{2})\geq0$$
and
$$(\eta(\mu_{a})-\eta(\mu_{b}))(T(u_{1})-T(u_{2}))\geq0$$
Since $u_1\geq u_2$, it follows that $T(u_{1})\geq T(u_{2})$ and thus $\eta(\mu_{a})-\eta(\mu_{b})\geq0$ and $\eta(\mu_{a})\geq\eta(\mu_{b})$.\\

The proof for each particular probability distribution is analogous to the proof of the general case.
\end{proof}
\begin{proof} Theorem \ref{stability}\\
Let $B\subseteq A\subseteq X$ and assume that $f(\Pi(A))\cap B\neq \emptyset$. Let $a\in f(\Pi(A))\cap B$. By Theorem \ref{general} we have that $a\rhd_{\Pi}b$ for all $b\in A$, and thus, for all $b\in B$. Then $a\in f(\Pi(B))$.\\

Now consider $a\in f(\Pi(B))$. Then $a\in B$ and $a\rhd_{\Pi}b$ for all $b\in B$. Suppose that $a\notin f(\Pi(A))$. Then there is a $c\in A$ such that $a\ntriangleright_{\Pi}c$. Since $\Pi$ is strongly compatible, we have that $c\rhd_{\Pi(A)}a$. Since we assume that $f(\Pi(A))\cap B\neq \emptyset$, there is an alternative $d\in f(\Pi(A))\cap B$. Then $d\rhd_{\Pi(A)}c$ and by transitivity we have that $a\ntriangleright_{\Pi(A)}d$. Since $d\in B$ and $\Pi$ is strongly compatible, we have that $a\ntriangleright_{\Pi(B)}d$. But $f$ is strongly efficient, so we have that $a\notin f(\Pi(B))$, which is a contradiction.
\end{proof}
\end{document}